\ifCLASSOPTIONcompsoc \usepackage[caption=false,font=normalsize,labelfont=sf,textfont=sf]{subfig}
\def\BibTeX{{\rm B\kern-.05em{\sc i\kern-.025em b}\kern-.08em
    T\kern-.1667em\lower.7ex\hbox{E}\kern-.125emX}}
\newcommand{\mycomment}[1]{}
\pgfplotsset{filter discard warning=false}
\pgfplotsset{compat=1.14}
\newtheorem{theorem}{Theorem}
\newtheorem{lemma}[theorem]{Lemma}
\newenvironment{subproof}[1][\proofname]{
  
  \begin{proof}[#1]
}{\end{proof}}
\crefname{equation}{\unskip}{\unskip}
\crefname{claim}{Claim}{Claims} 
\newcolumntype{C}[1]{>{\centering\arraybackslash}p{#1}}
\newcommand{\E}[1]{\mathbb{E}\left[ #1 \right]}
\newcommand{\CC}{\mathsf{CC}} 
\newcommand{\NBEC}{\mathsf{NBEC}} 
\newcommand{\nummotifs}{n}
\newcommand{\numchosen}{k}
\newcommand{\numreads}{R}
\newcommand{\numunique}{\ell}
\newcommand{\pinter}{\rho} 
\newcommand{\dv}{d_\mathsf{v}}
\newcommand{\dc}{d_\mathsf{c}}
\newcommand{\chainlength}{L_\mathsf{\rmine{p}}}
\newcommand{\blocklength}{N_\mathsf{\rfirst{p}}}
\newcommand{\rsc}{R_\mathsf{SC}}
\newcommand{\bx}{\mathbf{x}}
\newcommand{\by}{\mathbf{y}}
\newcommand{\bz}{\mathbf{z}}
\newcommand{\bs}{\mathbf{s}}
\newcommand{\cX}{\mathcal{X}}
\newcommand{\cY}{\mathcal{Y}}
\newcommand{\prob}[1]{\mathbb{P}\left(#1\right)}
\newcommand{\logtwo}[1]{\log_2\left[#1\right]}
\newcommand{\stirlingii}{\genfrac{\{}{\}}{0pt}{}}
\newcommand{\bPell}{\mathbf{P}^\ell}
\newcommand{\rcost}{p_\mathsf{r}}
\newcommand{\wcost}{p_\mathsf{w}}
\newcommand{\rwcostratio}{\lambda}
\newcommand{\numreadsopt}{\numreads^*}
\newcommand{\rfirst}[1]{#1}
\newcommand{\rsecond}[1]{#1}
\newcommand{\rthird}[1]{#1}
\newcommand{\rmine}[1]{#1}
\providecommand{\keywords}[1]
{
    {\small	\textbf{\textit{Index Terms---}#1.}}
}
\begin{document}

\title{Coding Over Coupon Collector Channels\\for Combinatorial Motif-Based DNA Storage}
\author{
    Roman Sokolovskii, Parv Agarwal, Luis Alberto Croquevielle, Zijian Zhou, and Thomas Heinis
    \thanks{This work was funded by MoSS (grant 101058035).}
    \thanks{The authors are with the Department of Computing, Imperial College London, SW7 2AZ London, UK (email:
    r.sokolovskii@imperial.ac.uk; p.agarwal23@imperial.ac.uk; a.croquevielle22@imperial.ac.uk; zijian.zhou18@imperial.ac.uk; t.heinis@imperial.ac.uk).}}

\maketitle

\begin{abstract} 
    Encoding information in combinations of pre-synthesised deoxyribonucleic acid (DNA) strands (referred to as \textit{motifs}) is an interesting approach to DNA storage that could potentially circumvent the prohibitive costs of nucleotide-by-nucleotide DNA synthesis. Based on our analysis of an empirical data set from HelixWorks, we propose two channel models for this setup (with and without interference) and analyse their fundamental limits. We propose a coding scheme that approaches those limits by leveraging all information available at the output of the channel, in contrast to earlier schemes developed for a similar setup by Preuss \textit{et al.} We highlight an important connection between channel capacity curves and the fundamental trade-off between synthesis (writing) and sequencing (reading) \rmine{costs}, and offer a way to mitigate an exponential growth in decoding complexity with the size of the motif library.
\end{abstract}
\keywords{DNA storage, coupon collector's problem, spatially coupled LDPC codes}

\section{Introduction}
\label{sec:intro}

The density and durability of deoxyribonucleic acid (DNA) make it an attractive medium for archival storage.
However, despite numerous demonstrations of the feasibility of DNA storage (we refer the reader to~\cite{ref:Ceze19} for a comprehensive overview), prohibitive cost of DNA synthesis hinders its widespread adoption.
A promising alternative to the most commonly used (and expensive) nucleotide-by-nucleotide DNA synthesis is encoding information in combinations of pre-synthesised oligonucleotides (referred to as \textit{motifs}) from a fixed motif library~\cite{ref:Yan23}. We refer to those combinations of motifs as \textit{combinatorial symbols}.
During synthesis (writing), the selected combinations of motifs are added to a pool of growing DNA strands, with no control over which of those motifs is attached to any given strand, and therefore over which of those motifs will be encountered during sequencing (reading).
The exponential growth in the number of combinations allows for a linear growth in information density with the size of the motif library, at the cost of additional sequencing.

Combinatorial motif-based DNA storage is a refreshing setup from a coding-theoretic perspective with little research conducted to date. In~\cite{ref:preuss2021efficient}, Preuss~\textit{et al.} emphasise an important connection between the process of accumulating the motifs during sequencing with the classical Coupon Collector's Problem. The authors employ maximum distance separable (MDS) codes and rely on fully accumulated symbols for decoding. In other words, the coding scheme requires only a subset of the combinatorial symbols to be recovered in order to decode a codeword, but these symbols must be recovered fully (i.e., all motifs constituting these symbols must be encountered during sequencing) for decoding to be possible. This MDS-based scheme is subsequently analysed in detail in a follow-up paper~\cite{ref:preuss2024sequencing} that proposes a computational framework for choosing the parameters of the coding scheme based on coverage and reliability constraints. \rmine{For a similar setup,~\cite{ref:coh2024opti}  gives the expected number of reads required to reconstruct information.} \rthird{We remark that in the case of standard (as opposed to combinatorial motif-based) synthesis, the connection between DNA sequencing and the Coupon Collector's Problem is formally introduced in~\cite{ref:bar2023cover}; a random-access version of the problem is further investigated in~\cite{ref:gruica2024reducing,ref:abraham2024covering}.}


The approach in~\cite{ref:preuss2021efficient,ref:preuss2024sequencing} has two important limitations: First, as we elaborate in Section~\ref{sec:channel}, using only fully recovered \rmine{symbols} during decoding reduces the pipeline to an erasure channel; processing partial combinations offers a promising way to improve decoding by leveraging all available information. Second, the setup in~\cite{ref:preuss2024sequencing} mitigates possible \textit{interference} (i.e., the detection of a motif that is not a part of the chosen combination) by requiring several copies of the candidate motifs to be encountered and declaring erasure otherwise. This further increases the required coverage---again, processing full available probabilistic information could significantly improve decoding performance.

\rmine{The first work to make use of partial combinations is~\cite{ref:saba2024error}, which studies adversarial code constructions that provably recover from up to $e$ missing motifs in up to $d$ symbols (the possibility of interference is not considered). The work provides bounds on the size of such codes and offers an explicit encoding scheme for correcting up to $e=1$ missing motif (with $d$ remaining a configurable parameter) when the block length does not exceed the size of the motif library.}

\begin{figure*}[t!]
    \centering
    \includegraphics[width=\textwidth,height=0.75\textheight,keepaspectratio]{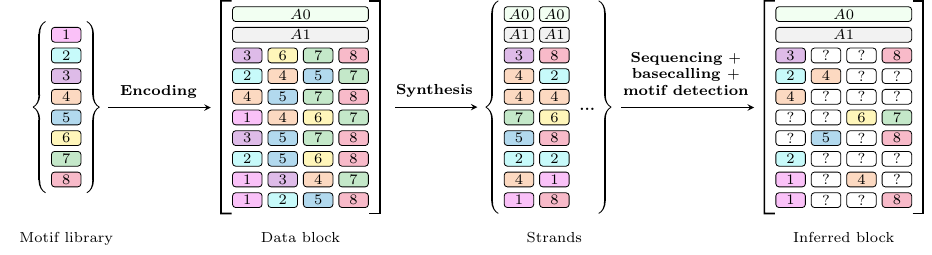}
    \vspace{-15pt}
    \caption{
    Combinatorial DNA data storage pipeline used in~\cite{ref:Yan23}. First, the data is encoded into blocks, each containing $2$ address-carrying symbols (denoted by $A0$ as $A1$) and $8$ payload-carrying symbols; a \rmine{payload} symbol (shown as a row of the data block) is a combination of $4$ distinct motifs chosen out of a library of $8$. 
    Second, the DNA pool is synthesised: in each payload synthesis cycle, the $4$ chosen motifs corresponding to the synthesised symbol are added to the reaction tube containing a set of growing strands (shown vertically as columns); a random motif is attached to each strand.
    The bioinformatics pipeline at the reading stage includes sequencing, basecalling, and motif detection. It produces the original data blocks with some motifs missing or interfering due to synthesis and detection errors. }
    \label{fig:protocol}
    \vspace{-2ex}
\end{figure*}

In this work, we consider error correction for combinatorial motif-based DNA storage beyond the MDS-based schemes proposed and analysed in~\cite{ref:preuss2021efficient,ref:preuss2024sequencing}, \rmine{aiming to use \textit{all} information available at the output of the channel}. Based on our analysis of the experimental data set provided by HelixWorks (Section~\ref{sec:motif_storage}), we formally introduce a channel model that mimics the empirically observed behaviour, which we call the Coupon Collector Channel, either with or without interference (Section~\ref{sec:channel}). We derive the capacity for the interference-free Coupon Collector Channel (Section~\ref{sse:capacity}) and its erasure-channel counterpart (Section~\ref{sse:erasure}), as well as propose a computational method to estimate the capacity of the channel with interference (Section~\ref{sse:interference_capacity}). We demonstrate that relying only on fully accumulated combinations imposes a fundamental limit on achievable performance. In Section~\ref{sse:read_write_cost}, we highlight an important connection between the capacity curves and the fundamental trade-off between the costs of reading and writing \rmine{based on the total-cost optimisation framework introduced in~\cite{ref:heck2017fund}}. Section~\ref{sec:ldpc} offers a way to approach the fundamental limits of the Coupon Collector Channel using protograph-based non-binary spatially coupled low-density parity-check (SC-LDPC) codes, which we demonstrate numerically in Section~\ref{sec:results}, and describes how to avoid an exponential growth in decoding complexity with the size of the motif library.




\section{Motif-Based DNA Storage}
\label{sec:motif_storage}

\subsection{General Principle}
\label{sse:general_principle}

The starting point of our work is the combinatorial motif-based DNA storage pipeline used by HelixWorks and described in detail in~\cite{ref:Yan23}. We illustrate it schematically in Fig.~\ref{fig:protocol}. Instead of individual nucleotides, synthesis operates with a library of pre-synthesised $50$-nucleotide motifs. The size of the library determines \rfirst{the} information density; a library of $8$ motifs is used in current real-world experiments by HelixWorks, and we use their parameters as our running example throughout this paper without loss of generality. The data is arranged into blocks, where each block contains $2$ symbols for addressing (denoted by $A0$ and $A1$ in Fig.~\ref{fig:protocol}) and $8$ combinatorial symbols for the payload. Each payload symbol, in turn, is a combination of $4$ distinct motifs out of $8$ in the library.\footnote{\rmine{A symbol used for addressing comprises a single motif from the library; we leave the problem of addressing outside the scope of this paper.}}  For example, the first payload symbol in the data block illustrated in Fig.~\ref{fig:protocol} is $\{3, 6, 7, 8\}$ (the symbols themselves are arranged from top to bottom). During synthesis, a block is represented by a set of growing strands. A synthesis cycle incorporates one combinatorial symbol; specifically, the $4$ motifs encoding the symbol are added to the pool and appended to the strands. Carrying on with the example in Fig.~\ref{fig:protocol}, we show two synthesised strands (a strand is shown as a column): during the first payload synthesis cycle, multiple copies of each of the motifs in the symbol $\{3,6,7,8\}$ are added to the pool; during the second cycle, the motifs $\{2,4,5,7\}$ are added, and so on. By design, only one out of the four added motifs gets attached to a strand in one cycle; however, there is no control over which one it will be. This fundamental randomness is the key property of motif-based DNA storage. In our running example in Fig.~\ref{fig:protocol}, motif $3$ happens to get attached to the first strand in the first payload synthesis cycle, and motif $8$ is attached to the second strand.

At the reading stage, the DNA strands are sequenced, and a bioinformatics pipeline that includes basecalling and motif detection attempts to reconstruct the stored data blocks. However, limited sequencing coverage and errors during basecalling and motif detection may lead to incomplete data retrieval and erroneous motif detection. Assuming that only the two strands displayed in Fig.~\ref{fig:protocol} are sequenced and that all motifs are correctly detected, only motifs $3$ and $8$ from the first payload cycle are observed at the reading end of the pipeline---the other two motifs remain unknown and could in principle be any two distinct motifs from the set $\{1,2,4,5,6,7\}$.


\vspace{-1ex}
\subsection{Experimental Data Set}
\label{sse:dataset}



\begin{figure*}[t]
    \centering
    \includegraphics[width=\textwidth]{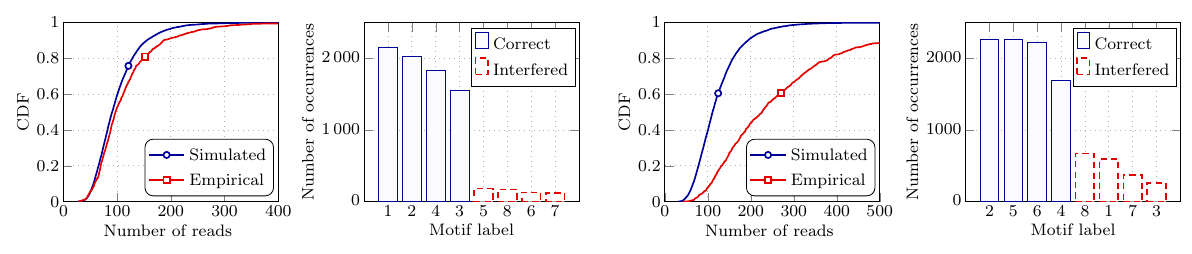}
    
    \vspace{-5pt}
    \caption{
    Empirical and simulated uncoded data recovery CDFs and motif histograms. The first two subplots correspond to block $21$. The first subplot shows a close match between the simulated (blue curve, circle marker) and empirical (red curve, square marker) CDF. The representative interfered motif histogram for symbol $8$ in this block (second subplot, red dashed bars) is relatively even. The last two plots correspond to block $1$. A large discrepancy between the simulated and empirical CDF (third plot) is caused by imbalanced interfered motif histogram (last plot, representative symbol $8$, red dashed bars).
    }
    \label{fig:grouped}
    \vspace{-2ex}
\end{figure*}

To investigate the empirical behaviour of the synthesis-storage-sequencing pipeline, we analysed a data set generated by a wet-lab experiment conducted by HelixWorks.
The experiment synthesised $21$ blocks. The resulting DNA pool was sequenced via \textit{Oxford Nanopore Technologies} (ONT) nanopore sequencing. The generated reads were basecalled and processed by HelixWorks's software pipeline, which detected $3.7$ million motif occurrences. The pipeline assigns reads to blocks using ONT's barcoding feature based on the two address symbols, and detects payload motifs using exact search for the $50$-nucleotide motif sequences in the library.\footnote{We are aware that more elaborate motif detection methods are being developed by HelixWorks and collaborators.}   Our analysis shows three distinct patterns in the input-output error behaviour, which we discuss in more detail in Section~\ref{sse:simulator}:

\begin{enumerate}
    \item As was highlighted in Section~\ref{sse:general_principle}, there is a fundamental uncertainty about which motif is attached during synthesis and read during sequencing. This means that accumulating the motifs at the reading stage resembles the classical Coupon Collector's Problem, as discussed in~\cite{ref:preuss2021efficient}.
    
    \item In addition to the coupon-collecting dynamics of the reading process, we have also observed significant variation in how often the motifs belonging to different locations within a block are detected, which we refer to as cycle visibility. In particular, the first motif in a block is detected $5.7$ times more often than the last, with a general monotonically decreasing trend from the beginning to the end of the block. This visibility variation suggests that there is an accumulation of synthesis failures or of detection errors in the pipeline.
    
    \item Finally, we observe some interference in the form of motifs which are erroneously detected in symbols that do not contain them. Different motifs are observed at different rates, and this pattern does not appear to be consistent across different blocks. We speculate that these errors are primarily caused by occasional incorrect assignment of entire reads to wrong blocks by ONT's barcoding.
\end{enumerate}

\subsection{Channel Simulator}
\label{sse:simulator}



To test whether the error patterns described in Section~\ref{sse:dataset} are sufficient to model the dynamics of the sequencing process, we developed a simulator that replicates those patterns using parameters derived from empirical data. For a given block and a given number of reads, we compare the probability of recovering the entire block (with no error-correction coding) for the simulated reads and the reads obtained by sub-sampling the empirical data set down to the required coverage.

For each read, the simulator first uses uniform sampling to generate a sequence of $8$ motifs, each motif chosen from the $4$ motifs of the corresponding symbol. Then, it samples from the empirical cycle visibility distribution to determine which (if any) of these motifs are visible at the reading stage. Finally, based on the empirical average interference rate, we emulate interference by uniform sampling from \textit{all} motifs in the library (in contrast to sampling only from the motifs belonging to the chosen symbol).

Fig.~\ref{fig:grouped} focuses on two representative blocks from the empirical data set: the first two plots correspond to block $21$, and the last two plots correspond to block $1$. For each block, we plot the cumulative distribution function (CDF) of the probability of fully recovering the entire block and a representative empirical histogram of the number of detected motifs for one of the symbols within the block. In the first (leftmost) subplot (block $21$), we observe a close match between the simulated (blue curve, circle marker) and empirical (red curve, square marker) CDFs. We picked symbol $8$ from that block and plotted the number of times each motif in the library was detected in that location (second subplot), splitting the motifs into those that belong to the chosen symbol (blue bars) and those that do not (red dashed bars). The distribution of interfered motifs (red dashed bars) is relatively even. Conversely, for block $1$, where there is a significant mismatch between the simulated and empirical CDFs (third plot), the typical distribution of interfered motifs (symbol $8$, last plot, red dashed bars) is imbalanced. We observe this \rmine{behaviour} more generally, with around half of the blocks in the data set showing a close match to the simulated model and the other half showing a significant mismatch with the associated imbalance in interfered motif distributions. \rfirst{We have not discerned any systematic pattern in which blocks exhibit this mismatch}.\footnote{We remark that when we artificially filter out interfered motifs, we observe a near-perfect match between the simulated and empirical CDFs for all blocks.}

We conclude that when the pattern of interference is relatively uniform, the channel simulator provides a good model of the synthesis-storage-sequencing pipeline. In this work, we focus on the coupon-collecting nature of the channel and assume uniform interference, leaving unequal cycle visibility and skewed interference outside the scope of the paper. Tackling these aspects of the pipeline is an interesting direction for future work.
\rsecond{We remark that the channel models proposed in Section~\ref{sec:channel} aim to reflect the empirically observed behaviour (see Section~\ref{sse:dataset}) while also being applicable to other motif-based DNA storage schemes beyond the particular pipeline employed by HelixWorks.}
\section{Coupon Collector Channels}
\label{sec:channel}


\begin{figure*}[ht!]
    \centering
    \includegraphics[width=\textwidth]{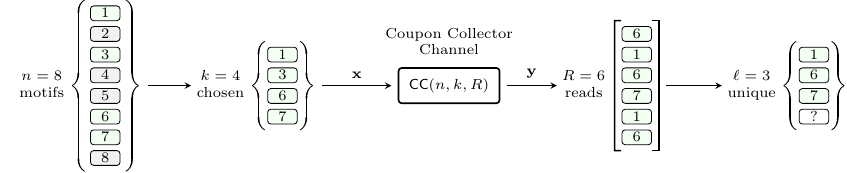}
    \vspace{-15pt}
    \caption{An example of transmission over the Coupon Collector Channel $\CC(\nummotifs=8,\numchosen=4,\numreads=6)$. From a library of $\nummotifs=8$ motifs, a combination of $\numchosen=4$ is chosen for transmission. The Coupon Collector Channel generates $\numreads=6$ reads by uniform sampling with replacement among the $\numchosen$ chosen motifs. In the illustrated example, only $\numunique=3$ motifs out of $\numchosen=4$ are encountered.}
    \label{fig:channel_model}
    \vspace{-2ex}
\end{figure*}

The simulator described in Section~\ref{sse:simulator} serves as the starting point for channel modelling. We are now ready to give a formal definition of the channel model without interference (we discuss interference in Section~\ref{sse:interference_capacity}).
Let there be a set $\mathcal{M}$ containing $n$ different motifs and let $k$ be a positive integer such that $k < n$. Then any subset $\bx \subset \mathcal{M}$ with exactly $k$ motifs is a combinatorial symbol. Let $\cX$ be the set of possible combinatorial symbols. It is easy to see that $|\cX| = \binom{n}{k}$.

Transmission through the channel involves a parameter $R$, which represents the number of reads and must be a positive integer. In its simplest version, the channel is denoted as $\CC(n,k,R)$ and represents interference-free sequencing. In this case, when a symbol $\bx$ is transmitted, the received sequence $\by$ is generated by sampling $R$ motifs from $\bx$, independently and uniformly at random. Figure \ref{fig:channel_model} illustrates this channel.

Notice that the received sequence can be written as $\by = (y_1, \ldots, y_R)$ where $y_i\in\bx$ for all $i=1,\ldots,R$. This means that the number of unique \rfirst{(i.e., distinct)} motifs present in $\by$ can range from $1$ to \rfirst{$L=\min\{k,R\}$}. We denote by $\cY$ the set of all possible received sequences---i.e., the set of $R$-tuples with at least $1$ and at most $L$ different motifs. \rfirst{Further, we say that $\bx$ and $\by$ are \textit{compatible} if $y_i \in \bx$ for all $i = 1, \ldots, R$, and we use $\cY(\bx)$ to denote the set of all $\by$ that are compatible with $\bx$.}

\subsection{Capacity of the $\CC(\nummotifs,\numchosen,\numreads)$ Channel}
\label{sse:capacity}


We first give the capacity of the channel as discussed so far.

\begin{theorem}
    \label{thm:capacity-cc-channel-no-interference}
    The capacity of $\CC(n,k,R)$ is given by 
    \begin{equation}
        \label{eq:capacity_cc}
        \mathcal{C}_\CC
        =
        \log_2\binom{n}{k}
        -
        k^{-R}
        \sum_{\ell=1}^L
        \binom{k}{\ell}\stirlingii{R}{\ell}\,\ell!\,
        \log_2\binom{n-\ell}{k-\ell}\,,
    \end{equation}
    where $L=\min\{k, R\}$, and \rfirst{$\stirlingii{R}{\ell} = 1/\ell!\sum_{i=0}^\ell (-1)^{\ell-i} \binom{\ell}{i} i^R$} are \rmine{the} Stirling numbers of the second kind.
\end{theorem}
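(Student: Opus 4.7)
The plan is to write $\mathcal{C}_\CC = \max_{P_X}I(X;Y)=\max_{P_X}\left[H(X)-H(X|Y)\right]$ and first argue that the uniform input on $\cX$ is capacity-achieving. For that, I would appeal to the symmetry of the channel under the natural action of the symmetric group on $\mathcal{M}$: relabelling the motifs permutes $\cX$ transitively and preserves the transition law, so any capacity-achieving $P_X^\star$ can be symmetrised over its orbit, and concavity of $I(X;Y)$ in $P_X$ guarantees that the resulting uniform distribution attains the maximum. Under a uniform input we have $H(X)=\log_2\binom{n}{k}$, and the task reduces to identifying $H(X|Y)$ with the second term of~\eqref{eq:capacity_cc}.

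For the conditional entropy, I would classify outputs by $\ell=|\{y_1,\ldots,y_R\}|$, the number of distinct motifs observed (so $1\le\ell\le L$). Given $\by$ with exactly $\ell$ distinct motifs, every input $\bx$ compatible with $\by$ must contain those $\ell$ motifs, while its remaining $k-\ell$ elements form an arbitrary subset of the $n-\ell$ unseen motifs; this yields $\binom{n-\ell}{k-\ell}$ compatible inputs. Since every compatible $\bx$ produces a given $\by$ with the same likelihood $k^{-R}$, the uniform prior induces a uniform posterior, so $H(X|Y=\by)=\log_2\binom{n-\ell}{k-\ell}$, depending on $\by$ only through $\ell$.

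It remains to compute the marginal distribution of $\ell$. Conditioned on $X=\bx$, the output $Y$ is uniform on $\bx^R$, and the number of length-$R$ sequences over $\bx$ with exactly $\ell$ distinct entries equals $\binom{k}{\ell}\,\ell!\,\stirlingii{R}{\ell}$ (choose which $\ell$ of the $k$ motifs appear, then count surjections from $\{1,\ldots,R\}$ onto them). Since this count is independent of $\bx$, the marginal of $\ell$ equals $k^{-R}\binom{k}{\ell}\,\ell!\,\stirlingii{R}{\ell}$; substituting into $H(X|Y)=\sum_{\ell=1}^{L}\prob{\ell}\log_2\binom{n-\ell}{k-\ell}$ and subtracting from $\log_2\binom{n}{k}$ recovers~\eqref{eq:capacity_cc}. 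The only step requiring anything beyond direct bookkeeping is the symmetry reduction to the uniform input; the combinatorial heart of the proof is merely recognising that the posterior depends on $\by$ only through the number of distinct motifs, and that the marginal of that statistic is the classical surjection count normalised by $k^R$.
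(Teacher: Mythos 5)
Your proposal is correct, and its combinatorial core coincides with the paper's: both classify outputs by the number $\ell$ of distinct observed motifs, note that the posterior given such a $\by$ is uniform over the $\binom{n-\ell}{k-\ell}$ compatible inputs (so $H(X|Y=\by)=\log_2\binom{n-\ell}{k-\ell}$), and weight these values by the surjection count $\binom{k}{\ell}\,\ell!\,\stirlingii{R}{\ell}\,k^{-R}$, which the paper isolates as Lemma~\ref{lemma:cc-channel-partition-size}. Where you genuinely diverge is in justifying that the uniform input is capacity-achieving. The paper verifies Gallager's definition of a symmetric channel: it partitions $\cY$ into the classes $\cY_\ell$, and for each $\ell$ checks that every row of the transition submatrix $\bPell$ contains the same multiset of entries ($\binom{k}{\ell}\stirlingii{R}{\ell}\ell!$ copies of $k^{-R}$, the rest zero) and likewise every column ($\binom{n-\ell}{k-\ell}$ copies of $k^{-R}$), then invokes \cite[Theorem 4.5.2]{ref:gallager1968information}. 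You instead use the automorphism group of the channel: the symmetric group on $\mathcal{M}$ acts transitively on $\cX$ and preserves the transition law, so averaging any optimal input over the group yields the uniform distribution, and concavity of $I(X;Y)$ in $P_X$ shows no mutual information is lost. Both arguments are standard and valid here; yours is arguably more economical (transitivity of the input action plus channel invariance suffices, with no need to count entries column by column), while the paper's has the advantage of resting on a single citable theorem and of reusing the partition $\{\cY_\ell\}$ and Lemma~\ref{lemma:cc-channel-partition-size} already needed for the entropy computation. If you write your version out in full, do spell out the symmetrisation step precisely: for each permutation $\sigma$ the pushforward input $P_X^\star\circ\sigma^{-1}$ achieves the same mutual information because $\sigma$ acts jointly on inputs and outputs as an isomorphism of the channel, and the group average of these pushforwards is uniform by transitivity.
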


Before proving this result, we introduce important notation and a useful Lemma. Notice that when a symbol $\bx$ is transmitted, only \rfirst{a compatible} $\by\in\cY(\bx)$ can be received, each being equally likely. Now, the set $\cY(\bx)$ can be partitioned into subsets that are easier to work with, defined as
\begin{equation}
    \label{eq:defn_y_ell}
    \cY_\ell(\bx) = \{
        \by\in\cY(\bx)
        \>|\>
        \by \text{ contains } \ell \text{ unique motifs}
    \}\,.
\end{equation}
Notice that $\{\cY_\ell(\bx)\}_{\ell=1}^L$ is a partition of $\cY(\bx)$. The following result gives the size of these sets.

\begin{lemma}
\label{lemma:cc-channel-partition-size}
    Let $\bx \in \cX$ and $\ell \in \{1,\ldots,L\}$ with $L=\min\{k,R\}$. Then it holds that
    \begin{equation}
        \label{eq:size_y_ell}
        |\cY_\ell(\bx)| = \binom{k}{\ell}\stirlingii{R}{\ell}\,\ell! \,,
    \end{equation}
    where $\stirlingii{R}{\ell}$ are Stirling numbers of the second kind.
\end{lemma}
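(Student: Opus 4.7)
The plan is a direct combinatorial counting argument, decomposing the construction of an element $\by \in \cY_\ell(\bx)$ into two independent choices and invoking the well-known surjection count via Stirling numbers of the second kind.

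First, I would recall that $\by = (y_1, \ldots, y_R)$ is an ordered $R$-tuple with entries from $\bx$, and that $\by \in \cY_\ell(\bx)$ means exactly $\ell$ distinct motifs from $\bx$ occur among $y_1, \ldots, y_R$. I would then split the construction of such a $\by$ into two stages: (i) selecting the subset $S \subseteq \bx$ of size $\ell$ consisting of the motifs that actually appear in $\by$, and (ii) choosing the ordered sequence $\by \in S^R$ in which every element of $S$ appears at least once. Since $|\bx| = k$, stage (i) contributes a factor of $\binom{k}{\ell}$.

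For stage (ii), I would observe that the valid sequences are in bijection with surjective functions from $\{1, \ldots, R\}$ onto $S$, where the function sends $i \mapsto y_i$. The number of such surjections from an $R$-set onto an $\ell$-set is the classical quantity $\stirlingii{R}{\ell}\,\ell!$: this follows from partitioning $\{1,\ldots,R\}$ into $\ell$ nonempty blocks (the preimages), which is counted by $\stirlingii{R}{\ell}$, and then assigning these blocks bijectively to the $\ell$ elements of $S$, which contributes $\ell!$. Multiplying the two factors yields $|\cY_\ell(\bx)| = \binom{k}{\ell}\stirlingii{R}{\ell}\,\ell!$, as claimed.

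There is no substantive obstacle here; the only things to verify are that the decomposition is indeed a bijection (every $\by \in \cY_\ell(\bx)$ determines a unique pair $(S, f)$ and conversely), and that the range of $\ell$ is correctly $1 \le \ell \le L = \min\{k, R\}$ — needed so that both $\binom{k}{\ell}$ and $\stirlingii{R}{\ell}$ are nonzero and the surjection interpretation is meaningful. If desired, one could also sanity-check by summing over $\ell$: since $\{\cY_\ell(\bx)\}_{\ell=1}^L$ partitions $\cY(\bx)$ and $|\cY(\bx)| = k^R$, the identity $\sum_{\ell=1}^L \binom{k}{\ell}\stirlingii{R}{\ell}\,\ell! = k^R$ must hold, which is exactly the standard identity counting all functions $\{1,\ldots,R\}\to\bx$ by grouping according to the size of the image.
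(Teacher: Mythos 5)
Your proof is correct and takes essentially the same approach as the paper: choose the $\ell$-subset of $\bx$ that appears ($\binom{k}{\ell}$ ways), then count the surjections from the $R$ read positions onto that subset ($\stirlingii{R}{\ell}\,\ell!$ ways). The bijection justification and the $\sum_\ell$ sanity check are nice additions but do not change the argument.
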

\begin{proof}
    On the one hand, there are $\binom{k}{\ell}$ ways to choose $\ell$ unique motifs out of $k$ motifs in $\bx$. On the other, there are $\stirlingii{R}{\ell}\ell!$ ways to distribute $R$ distinct reads among $\ell$ distinct motifs (see case 3 of The Twelvefold Way in \cite[Section 1.9]{ref:stanley2011enumerative}). Multiplying both factors gives the result.
\end{proof}

Consider two random variables $X, Y$ such that $X$ represents the transmitted combinatorial symbol and $Y$ represents the corresponding received sequence.
We use the following compact notation:
$\prob{\bx}$ denotes $\prob{X=\bx}$, and analogously for $\prob{\by}$. $\prob{\bx | \by}$ denotes $\prob{X=\bx | Y=\by}$, and analogously for $\prob{\by | \bx}$. Finally, $\prob{\bx, \by}$ denotes $\prob{X=\bx, Y=\by}$.

\begin{proof}[Proof of Theorem \ref{thm:capacity-cc-channel-no-interference}]
     
     We first prove that the mutual information $I(X;Y)$ under the uniform input distribution is equal to the right-hand side of~\eqref{eq:capacity_cc}. Second, we prove that the uniform distribution is capacity-achieving.

    \begin{subproof}[Uniform input distribution]
         Assume $X$ to be uniformly distributed. We expand $I(X;Y)=H(X)-H(X|Y)$ and calculate the two entropy terms. First, since $|\cX|=\binom{n}{k}$ and $\prob{\bx} = 1/|\cX|$ for all $\bx$, we get
        \begin{equation}
            \label{eq:cc-channel-entropy-X}
            H(X)
            =
            -\sum_{\bx\in\cX} \prob{\bx}\log_2\prob{\bx}
            =
            \log_2\binom{n}{k}\,.
        \end{equation}
        Second, we need to compute $H(X|Y)$.
        By definition of the channel, it is easy to see that $\prob{\by|\bx}=k^{-R}$ if $\by\in\cY(\bx)$, and $\prob{\by|\bx}=0$ otherwise. Using this, we get
        \begin{align}
            H(X|Y)
            &=
            -
            \sum_{\bx\in\cX}\sum_{\by\in\cY}
            \prob{\bx,\by}
            \log_2\prob{\bx|\by} \nonumber \\
            &=
            -
            \sum_{\bx\in\cX}\prob{\bx}
            \sum_{\by\in\cY(\bx)}
            \prob{\by|\bx}\log_2\prob{\bx|\by} \nonumber \\
            &=
            -
            \sum_{\bx\in\cX}\prob{\bx}
            \sum_{\by\in\cY(\bx)}k^{-R}\log_2\prob{\bx|\by}\,.
            \label{eq:cc-channel-entropy-X-given-Y-v1}
        \end{align}
        Notice that $\prob{\bx|\by}$ only depends on the number of unique motifs within $\by$. Also, it is clear that if $\by$ has $\ell$ unique motifs, then there are $\binom{n-\ell}{k-\ell}$ possible symbols in $\cX$ that are compatible. That is, $\prob{\bx|\by} = 1/\binom{n-\ell}{k-\ell}$. Replacing in (\ref{eq:cc-channel-entropy-X-given-Y-v1}):
        \begin{align}
            H(X|Y)
            &=
            k^{-R}
            \sum_{\bx\in\cX}\prob{\bx}
            \sum_{\ell=1}^L\sum_{\by\in\cY_\ell(\bx)}
            \log_2\binom{n-\ell}{k-\ell} \nonumber \\
            &=
            \label{eq:cc-channel-entropy-X-given-Y-v2}
            k^{-R}
            \sum_{\bx\in\cX}\prob{\bx}
            \sum_{\ell=1}^L |\cY_\ell(\bx)|\,
            \log_2\binom{n-\ell}{k-\ell}\,.
        \end{align}

        We can substitute the value given in~\eqref{eq:size_y_ell} for $|\cY_\ell(\bx)|$ into~\eqref{eq:cc-channel-entropy-X-given-Y-v2}. Since the inner sum does not depend on the value of $\bx$, we get\footnote{We remark that~\eqref{eq:cc-channel-entropy-X-given-Y-v3} can be interpreted as averaging the conditional entropy of $X$ over the distribution of the number of unique motifs $\ell$ observed in $Y$.}
        \begin{equation}
            \label{eq:cc-channel-entropy-X-given-Y-v3}
            H(X|Y)
            =
            k^{-R}
            \sum_{\ell=1}^L \binom{k}{\ell}\stirlingii{R}{\ell}\,\ell!\,
            \log_2\binom{n-\ell}{k-\ell}\,.
        \end{equation}
        Combining~\eqref{eq:cc-channel-entropy-X} and \eqref{eq:cc-channel-entropy-X-given-Y-v3}, we conclude that under uniform input distribution $I(X;Y)$ is equal to the right-hand side of~\eqref{eq:capacity_cc}.
    \end{subproof}

    \begin{subproof}[Symmetry]
        It remains to prove that the uniform input distribution maximises $I(X;Y)$. For this, we rely on the following result from \cite{ref:gallager1968information}:

        \begin{lemma}[{\cite[Theorem 4.5.2]{ref:gallager1968information}}]
        \label{lemma:capacity-symmetric-channels}
            For a symmetric discrete memoryless channel, capacity is achieved by using the inputs with equal probability.
        \end{lemma}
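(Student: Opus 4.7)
The plan is to prove Gallager's result by combining the Karush--Kuhn--Tucker (KKT) characterisation of capacity-achieving input distributions with the structural symmetries of the channel's transition matrix. Recall that in Gallager's framework, a DMC with transition probabilities $W(y|x)$ is symmetric when the output alphabet admits a partition $\mathcal{Y}=\bigsqcup_i \mathcal{Y}_i$ such that, for each $i$, the submatrix obtained by restricting to columns in $\mathcal{Y}_i$ has rows that are permutations of one another and columns that are permutations of one another.

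First I would invoke the standard characterisation (derivable via concavity of $I(P;W)$ in $P$ together with a Lagrangian argument on the probability simplex): an input distribution $P^*$ achieves the capacity $C$ if and only if $D\bigl(W(\cdot|x)\,\|\,Q^*\bigr)=C$ for every $x$ with $P^*(x)>0$ and $D\bigl(W(\cdot|x)\,\|\,Q^*\bigr)\le C$ otherwise, where $Q^*(y)=\sum_x P^*(x)W(y|x)$ is the induced output distribution. It therefore suffices to show that, when $P^*(x)=1/|\mathcal{X}|$, the divergence $D\bigl(W(\cdot|x)\,\|\,Q^*\bigr)$ takes the same value for every input $x$, with equality to the quantity $C=I(P^*;W)$ following automatically since that common value must coincide with the mutual information.

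Next I would verify this condition in two short steps. First, by the column-permutation property inside each $\mathcal{Y}_i$, the column sum $\sum_x W(y|x)$ is constant over $y\in\mathcal{Y}_i$; the uniform input therefore induces an output distribution $Q^*$ that is piecewise constant on the partition, say $Q^*(y)=q_i$ for all $y\in\mathcal{Y}_i$. Second, writing
\[
D\bigl(W(\cdot|x)\,\|\,Q^*\bigr)=\sum_i\sum_{y\in\mathcal{Y}_i}W(y|x)\log\frac{W(y|x)}{q_i},
\]
the row-permutation property guarantees that the multiset $\{W(y|x):y\in\mathcal{Y}_i\}$ is independent of $x$, so the inner sum depends only on $i$. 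Summing over $i$ yields a value that does not depend on $x$, which is exactly the condition required by the characterisation.

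The main obstacle is really the invocation of the KKT characterisation itself, which rests on concavity of $I(P;W)$ in $P$ and on the convex-analytic structure of the simplex; once granted, the remainder is bookkeeping on the two permutation properties. If one wishes to avoid KKT entirely, a clean alternative is a symmetry-averaging route: the row-permutation property supplies, for each transposition of two inputs, an output permutation preserving the partition that realises the swap as a channel automorphism; the group these generate acts transitively on $\mathcal{X}$ and leaves $I(\cdot;W)$ invariant, so averaging an arbitrary $P$ over this group produces the uniform distribution, and concavity of $I(\cdot;W)$ forces the averaged distribution to do at least as well as the original.
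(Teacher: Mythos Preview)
The paper does not prove this lemma at all: it simply quotes Theorem~4.5.2 from Gallager as a black box and then verifies that the Coupon Collector Channel satisfies Gallager's symmetry hypothesis. So there is no ``paper's own proof'' to compare against; your proposal goes strictly beyond what the paper does.

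Your main argument via the KKT characterisation is correct and is essentially the standard textbook proof: the column-permutation property forces the uniform-input output law $Q^*$ to be constant on each block $\mathcal{Y}_i$, and the row-permutation property makes the multiset $\{W(y\mid x):y\in\mathcal{Y}_i\}$ independent of $x$, so $D\bigl(W(\cdot\mid x)\,\|\,Q^*\bigr)$ is constant in $x$ and the KKT conditions are met with equality everywhere.

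The one caveat concerns your alternative ``symmetry-averaging'' route. Gallager's definition only says that, within each $\mathcal{Y}_i$, any two rows are permutations of one another and any two columns are permutations of one another. For a fixed pair of inputs $x,x'$ this gives you, on each block, \emph{some} output permutation $\sigma_i$ with $W(\sigma_i(y)\mid x')=W(y\mid x)$; but there is no guarantee that the pieced-together $\sigma$ also fixes every \emph{other} row, i.e.\ that $W(\sigma(y)\mid x'')=W(y\mid x'')$ for $x''\notin\{x,x'\}$. In other words, Gallager-symmetry does not directly yield a transitive group of channel automorphisms acting on the inputs, so the averaging argument as you stated it needs more work (or a stronger notion of symmetry). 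This does not affect your primary KKT proof, which stands on its own.
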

        
        We can therefore prove that the uniform input distribution achieves capacity by proving that $\CC(n,k,R)$ is symmetric---i.e., that the set of channel outputs can be partitioned in such a way that for each subset the corresponding matrix of transition probabilities $\prob{\by | \bx}$ has the property that each row is a permutation of each other row and each column is a permutation of each other column~\cite{ref:gallager1968information}. 
        
        For this, consider the partition $\{\cY_\ell\}_{\ell=1}^L$ of $\cY$, where $\cY_\ell$ is the set of all $\by\in\cY$ such that $\by$ has $\ell$ unique motifs. Now, let $\ell\in \{1,\ldots,L\}$ and consider the transition probabilities $\prob{\by | \bx}$ where  $\by\in\cY_\ell$ and $\bx\in\cX$. These probabilities can be written as a matrix $\bPell$ where element $\bPell_{ij}$ is the probability of receiving the $j$-th element of $\cY_\ell$ when the $i$-th element of $\cX$ is transmitted (consider any valid total order over $\cY_\ell$ and $\cX$). The elements $\bPell_{ij}$ can take only one of two possible values: $\bPell_{ij}=k^{-R}$ if $\bx$ and $\by$ are compatible, and $\bPell_{ij}=0$ otherwise.
    
        For any given $\by\in\cY_\ell$, there are $\binom{n-\ell}{k-\ell}$ compatible $\bx$ and $\binom{n}{k} - \binom{n-\ell}{k-\ell}$ non-compatible $\bx$. On the other hand, for any given $\bx\in\cX$, there are $|\cY_\ell(\bx)|$ compatible $\by$ and $|\cY_\ell| - |\cY_\ell(\bx)|$ non-compatible $\by$. In other words, and using Lemma \ref{lemma:cc-channel-partition-size} to get the value of $|\cY_\ell(\bx)|$:
        \begin{itemize}
            \item Every column in $\bPell$ has $\binom{n}{k}$ entries. Of these, $\binom{n-\ell}{k-\ell}$ have value $k^{-R}$, and the rest are $0$. This is the same for all columns, so it holds that all columns in $\bPell$ are permutations of each other.
            \item Every row in $\bPell$ has $|\cY_\ell|$ entries. Of these, $\binom{k}{\ell}\stirlingii{R}{\ell}\ell!$ have value $k^{-R}$, and the rest are $0$. This is the same for all rows, so it holds that all rows in $\bPell$ are permutations of each other.
        \end{itemize}
    
        These last two conditions constitute the definition of a symmetric channel, since $\{\cY_\ell\}_{\ell=1}^L$ is a partition of $\cY$. We conclude that the $\CC(n,k,R)$ channel is symmetric.
    \end{subproof}
    Together, the expression for $I(X;Y)$ under uniform input distribution and the symmetry of the channel prove Theorem~\ref{thm:capacity-cc-channel-no-interference}.
\end{proof}


\vspace{-4ex}
\subsection{Erasure Version of the Coupon Collector Channel}
\label{sse:erasure}

Suppose now that the decoder \rfirst{relies only on outputs $\by$ that fully determine the corresponding input $\bx$---in other words, on $\by\in\cY_\numchosen$.}
In this case, the channel effectively reduces to the non-binary erasure channel with erasure probability
\begin{equation}
    \label{eq:erasure_probability}
    1 - F(\numchosen,\numreads) = 1 - \stirlingii{\numreads}{\numchosen}\,\numchosen!\,\numchosen^{-\numreads}\,,
\end{equation}
where $F(\numchosen,\numreads)$ is the CDF in the standard Coupon Collector's Problem, denoting the probability that $\numreads$ trials are sufficient to collect all $\numchosen$ coupons~\cite[Example II.11]{ref:flajolet2009analytic}.
We denote the resulting all-or-nothing version of the Coupon Collector Channel as $\NBEC(\nummotifs,\numchosen,\numreads)$.

The capacity of $\NBEC(\nummotifs,\numchosen,\numreads)$ is a scaled CDF from the Coupon Collector's Problem $F(\numchosen,\numreads)$:%
\begin{equation}
    \label{eq:capacity_nbec}
    \mathcal{C}_\NBEC
    =
    \log_2\binom{n}{k} \cdot
   \stirlingii{\numreads}{\numchosen}\,\numchosen!\,\numchosen^{-\numreads}\,.
\end{equation}

\begin{figure}[!t]
    \centering
    \includegraphics[width=\columnwidth]{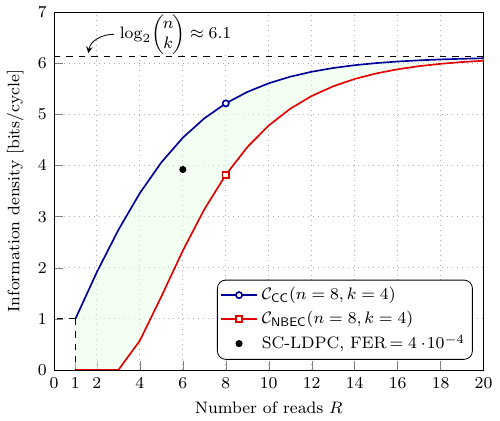}
    \vspace{-18pt}
    \caption{Capacities of $\CC$ (blue, circle marker) from~\eqref{eq:capacity_cc} and of $\NBEC$ (red, square marker) from~\eqref{eq:capacity_nbec} for $(\nummotifs=8,\numchosen=4)$. The horizontal dashed line shows the asymptotic limit for both curves. The region of operation enabled by full processing of partial information is shaded green. The black dot shows the operating point of the coding scheme proposed in Section~\ref{sec:ldpc}, see Section~\ref{sec:results} for details. The code operates well above the capacity of $\NBEC$.}
    \label{fig:capacity}
    \vspace{-2ex}
\end{figure}

Ignoring partial information available at the output of the Coupon Collector Channel entails a loss in achievable information rate for a given $\numreads$, as we illustrate in Fig.~\ref{fig:capacity}. The blue curve (circle marker) shows the capacity of $\CC(\nummotifs=8,\numchosen=4,\numreads)$ calculated using~\eqref{eq:capacity_cc}, and the red curve (square marker) shows the corresponding $\mathcal{C}_\NBEC$ from~\eqref{eq:capacity_nbec}. Both curves tend to $\log_2\binom{\nummotifs}{\numchosen}$ as $\numreads\to\infty$ (horizontal dashed line), where redundancy is no longer required.
A key contribution of this paper (see Section~\ref{sec:ldpc} below) is a coding scheme that uses all information available in $\by$ and is thus able to operate above $\mathcal{C}_\NBEC$ (i.e., in the area shaded green in Fig.~\ref{fig:capacity}), beyond the reach of the schemes that wait for all $\numchosen$ motifs to be observed, including the MDS-based schemes proposed and analysed in~\cite{ref:preuss2021efficient,ref:preuss2024sequencing}. The black dot in Fig.~\ref{fig:capacity} shows the operating point of a particular coding scheme described in detail in Section~\ref{sec:results}, demonstrating that it is indeed possible to operate in the green-shaded region in practice.




\vspace{-1ex}
\subsection{Coupon Collector Channel with Interference}
\label{sse:interference_capacity}

As we have observed in the experimental data set (Section~\ref{sse:dataset}), the bioinformatics pipeline may sometimes erroneously detect a motif that was not present in the transmitted symbol $\bx$.
Such errors are most likely caused by an incorrect assignment of a read to a wrong block and can therefore be though of as cross-block interference.

We incorporate this \rfirst{type of error} into the Coupon Collector Channel by introducing an auxiliary parameter $\pinter$ that controls the rate of interference. Specifically, given $\bx \in \cX$, each element $y_i$ in $\by = (y_1, \ldots, y_R)$ is \rfirst{independently} generated as follows: with probability $1 - \pinter$, $y_i$ is sampled from $\bx$ uniformly at random as for the standard $\CC(\nummotifs,\numchosen,\numreads)$ introduced earlier; otherwise, with probability $\pinter$, $y_i$ is sampled uniformly at random from the entire motif library $\mathcal{M}$ without regard to $\bx$, modelling interference from a different block. \rfirst{Importantly, we do \textit{not} sample from $\mathcal{M} \setminus \bx$ but from the entire $\mathcal{M}$ when we emulate interference because an interfered read from a wrong block may happen to contain a motif that is also present in $\bx$ by coincidence.} We denote the Coupon Collector Channel with interference by $\CC(\nummotifs,\numchosen,\numreads,\pinter)$.

We \rmine{can} estimate the capacity of $\CC(\nummotifs,\numchosen,\numreads,\pinter)$ \rfirst{using uniform input distribution (our model of interference does not break the symmetry of the channel). We rewrite} $I(X;Y)$ as
\begin{equation}
    \label{eq:mutual_information_expectation}
    I(X;Y) = H(X) - \mathbb{E} \Big[ H(X | Y = \by) \Big]\,,
\end{equation}
\rfirst{calculate $H(X)$ as in~\eqref{eq:cc-channel-entropy-X}, and estimate} the expectation using samples $\by$ generated via Monte-Carlo \rfirst{simulations} of the channel. \rfirst{Specifically, we generate a number of samples $\by$, calculate $H(X | Y = \by)$ for each sample as described below, average the obtained values to estimate $\mathbb{E} \big[ H(X | Y = \by)\big]$, and use it in~\eqref{eq:mutual_information_expectation} to obtain an estimation of the capacity of $\CC(\nummotifs,\numchosen,\numreads,\pinter)$.}

It remains to show how to calculate
\begin{equation}
    \label{eq:cond_entropy_given_x}
    H(X | Y = \by) = - \sum_{\bx \in \cX} \prob{\bx | \by}\log_2 \prob{\bx | \by}\,,
\end{equation}
which amounts to calculating $\prob{\bx|\by}$ for a given $\by$ and all $\bx \in \cX$. We observe that
\begin{equation}
    \label{eq:cond_prob_x_y}
    \prob{\bx | \by} = \prob{\by | \bx} \frac{\prob{\bx}}{\prob{\by}} \propto \prob{\by | \bx}\,,
\end{equation}
since both $\prob{\bx}$ and $\prob{\by}$ are constant for uniform input distribution and a fixed $\by$. This means that $\prob{\bx | \by}$ can be obtained by normalising $\prob{\by | \bx}$ by $\sum_{\bx' \in \cX} \prob{\by | \bx'}$, and we are left with the task of calculating $\prob{\by | \bx}$. We have
\begin{equation}
    \label{eq:cond_prob_y_x_prod}
    \prob{\by | \bx} = \prod_{i=1}^{\numreads} \prob{y_i | \bx}
\end{equation}
with
    \begin{align}
        \prob{y_i | \bx} &= \pinter \cdot \prob{y_i | \bx, \text{int.}} + (1 - \pinter) \cdot \prob{y_i | \bx, \text{no int.}}  \label{eq:cond_prob_elem_y_x} \\
        &= \left\{
        \begin{aligned}
            \pi_\mathsf{in} &= \pinter \cdot \frac{1}{\nummotifs} + (1 - \pinter) \cdot \frac{1}{\numchosen}\,,  & &\text{if~} y_i \in \bx \,;\\
            \pi_\mathsf{out} &= \pinter \cdot \frac{1}{\nummotifs}\,, & &\text{otherwise.} \\
        \end{aligned}
        \right. \nonumber
    \end{align}

Putting it all together, for a given $\by$ and each $\bx$, we need to count the number \rfirst{$s = |\left\{i: y_i \in \bx\right\}|$} of motifs in $\by$ that match the motifs in $\bx$, calculate
\begin{equation}
    \label{eq:cond_prob_y_x}
    \prob{\by | \bx} = \pi_\mathsf{in}^s \, \pi_\mathsf{out}^{\numreads - s}\,,
\end{equation}
normalise the resulting vector of probabilities \rfirst{$\left[\prob{\by | \bx}\right]_{\bx \in \cX}$} to sum up to~$1$, and use \rmine{$\left[\prob{\bx | \by}\right]_{\bx \in \cX}$ thus obtained} \rsecond{when calculating}~\eqref{eq:cond_entropy_given_x} to estimate~\eqref{eq:mutual_information_expectation} using simulated samples $\by$.
We remark that this procedure does not scale well with increasing $\nummotifs$ and $\numchosen$ because it requires considering all $|\cX| = \binom{n}{k}$ possible $\bx$ to calculate~\eqref{eq:cond_entropy_given_x}; providing a more efficient way to estimate (or compute exactly) the capacity of $\CC(\nummotifs,\numchosen,\numreads,\pinter)$ is therefore an interesting research problem that falls outside the scope of this work.

\begin{figure}[!t]
    \centering
    \includegraphics[width=\columnwidth]{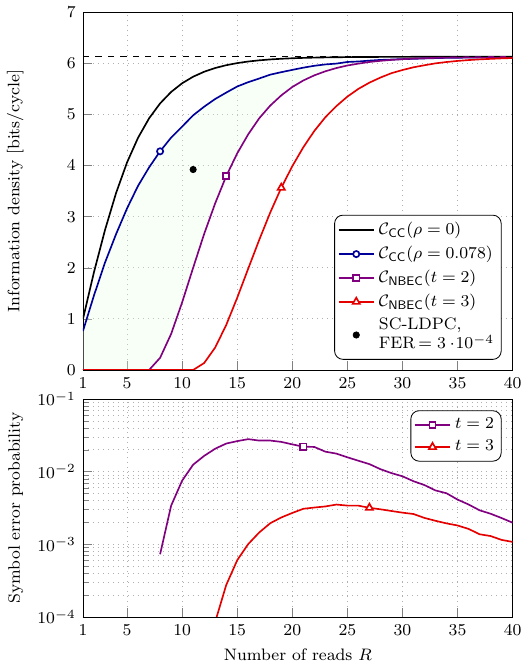}
    \vspace{-18pt}
    \caption{Top: capacity curves for the Coupon Collector Channel with $(\nummotifs=8,\numchosen=4)$, no interference (black curve, no marker) and $\pinter=0.078$ (blue curve, circle marker). For the hard-decision decoding scheme that waits for $t$ copies of the top $\numchosen$ motifs before making a decision, the capacity of the associated non-binary erasure channel (neglecting interference) is shown for $t=2$ (purple curve, square marker) and $t=3$ (red curve, triangle marker). Hard-decision schemes that mitigate interference by accumulating $t > 1$ copies of top-$\numchosen$ motifs cannot operate in the green-shaded region even when no interference is present. In contrast, the black dot shows the operating point of the coding scheme proposed in Section~\ref{sec:ldpc} for $\pinter=0.078$, see Section~\ref{sec:results} for details. Soft information processing allows the scheme to operate in the green-shaded region. Bottom: symbol-level substitution rates for the same hard-decision schemes with $t = 2$ and $t = 3$ as in the top figure (same line styles) and $\pinter=0.078$. These errors are not accounted for in the corresponding capacity curves, making the green-shaded region underestimate the potential gain from soft-decision information processing.}\label{fig:capacity_interference}
    \vspace{-2ex}
\end{figure}

In Fig.~\ref{fig:capacity_interference} (top), we plot the resulting capacity estimation for $\CC(\nummotifs=8,\numchosen=4,\numreads,\pinter)$ with $\pinter = 0.078$, the level of interference observed in the empirical data set (see Section~\ref{sse:dataset}), as the blue curve with the circle marker. Naturally, interference induces a penalty with respect to the corresponding interference-free channel (black curve with no marker).

The MDS-based schemes proposed and analysed in~\cite{ref:preuss2021efficient,ref:preuss2024sequencing} do not attempt to optimally process information available in $\by$. Instead, they make a hard decision on the symbol $\bx$ by choosing the $\numchosen$ most frequent motifs in $\by$. To mitigate interference, they require at least $t > 1$ copies of each motif in the top $\numchosen$ to be present in $\by$ and declare erasure if this condition is not met~\cite{ref:preuss2024sequencing}.
We estimate an upper bound on the capacity associated with this decoding approach by simulating the Coupon Collector Channel \textit{with no interference} and declaring erasure whenever $\by$ does not contain at least $t$ copies of each of the $\numchosen$ motifs in $\bx$. Having simulated the associated erasure probability,\footnote{An alternative way to compute this probability based on the Markov chain formulation of the coupon-collecting process is provided in~\cite{ref:preuss2024sequencing}.} which we denote by $\epsilon_t$, we calculate the capacity of the resulting non-binary erasure channel as
\begin{equation}
    \label{eq:capacity_nbec_t}
    \mathcal{C}_\NBEC(\nummotifs, \numchosen, t) = \log_2\binom{n}{k} \cdot (1 - \epsilon_t)\,.
\end{equation}
The resulting capacity estimations are shown in Fig.~\ref{fig:capacity_interference} (top) for $t = 2$ (purple curve with the square marker) and $t = 3$ (red curve with a triangle marker).
We observe that even without interference and for $t = 2$, the hard-decision approach analysed in~\cite{ref:preuss2024sequencing} exhibits a gap to $\CC(\nummotifs,\numchosen,\numreads,\pinter=0.078)$, which we shade green in Fig.~\ref{fig:capacity_interference} (top).
The coding scheme we propose in Section~\ref{sec:ldpc} operates in this region (black dot).
Moreover, the hard-decision schemes yield a symbol-level substitution---i.e., decide on a symbol $\hat{\bx} \neq \bx$---when \rfirst{a motif from $\mathcal{M} \setminus \bx$ is one of the top-$\numchosen$ most encountered motifs in $\by$ and is present} in at least $t$ copies. We simulate the rate of such symbol-level substitutions for $\CC(\nummotifs=8,\numchosen=4,\numreads,\pinter=0.078)$ and plot the corresponding curves in Fig.~\ref{fig:capacity_interference} (bottom) for $t = 2$ and $t = 3$ using the same line styles as in Fig.~\ref{fig:capacity_interference} (top).
These errors are not included in the analysis in~\cite{ref:preuss2024sequencing} because easily distinguishable motifs and the choice of $t > 1$ render them unlikely. While we agree with this in principle, we emphasise that $\pinter$ also depends on the probability of assigning a read to a wrong block, and we observe that the relatively high level of interference $\pinter = 0.078$ in the empirical data set results in non-negligible symbol error rates of about $10^{-2}$ for $t=2$ and $10^{-3}$ for $t=3$.

\vspace{-1ex}
\subsection{Read-Write Cost Trade-Off}
\label{sse:read_write_cost}

The read-write cost trade-off (first introduced explicitly in~\cite{ref:chandak2019improved}) is a fundamental property of DNA storage: the more redundancy is introduced to the system in the form of error-correcting codes (i.e., the lower the code rate), the fewer reads will suffice at the sequencing stage to reconstruct the data (i.e., the lower the required coverage).
Before we proceed to designing specific error-correcting schemes for the Coupon Collector Channels (Section~\ref{sec:ldpc}), we would like to highlight an important connection between the capacity curves and the read-write cost trade-off. We will use the notation and terminology of the Coupon Collector Channels and motif-based synthesis, although the same reasoning applies in equal measure to more standard nucleotide-by-nucleotide storage schemes.

The task of the system designer is to choose the code rate and the target coverage at the optimum point of the read-write \rfirst{cost} trade-off. The location of the optimum point depends on the costs of synthesising and sequencing a symbol. Denote by $\wcost~[\$/\text{cycle}]$ the cost of running one synthesis cycle and writing one combinatorial symbol, and by $\rcost~[\$/\text{read}]$ the cost of sequencing one motif belonging to that symbol. Suppose $C(\numreads)$ is a capacity curve that shows the highest information density (in information bits per cycle) achievable for a given \rfirst{number of reads} $\numreads$---implying that it would take $1/C(\numreads)$ cycles to store one information bit and $R / C(\numreads)$ reads to retrieve it.
The goal is to minimise the total cost of DNA storage per information bit,
\begin{equation}
    \label{eq:price}
    \begin{aligned}
        P(\numreads)  &= \wcost \cdot \frac{1}{C(\numreads)} + \rcost \cdot \frac{\numreads}{C(\numreads)} ~~~[\$/\text{inf. bit}] \\
        &= \rcost \cdot \frac{\rwcostratio + \numreads}{C(\numreads)} \,,
    \end{aligned}
\end{equation}
where $\rwcostratio = \wcost / \rcost$ describes how much more expensive synthesis is relative to sequencing. \rmine{This optimisation is introduced in~\cite{ref:heck2017fund}; in the following, we offer its visual interpretation based on the capacity curves.}
Assuming a well-behaved $C(\numreads)$,\footnote{A continuous version of $\mathcal{C}_\CC(\numreads)$ can be obtained by modelling $\numreads$ as a random variable and plotting  $\mathcal{C}_\CC$ against $\E{\numreads}$.} the optimum is obtained when
\begin{equation}
    \label{eq:price_derivative}
    P'(\numreads) = \rcost \cdot \frac{C(\numreads) - (\rwcostratio + \numreads)\,C'(\numreads)}{C(\numreads)^2} = 0\,,
\end{equation}
implying that at the optimum $\numreads$, denoted by $\numreadsopt$,
\begin{equation}
    \label{eq:rwcost_condition}
    C(\numreadsopt) = (\rwcostratio + \numreadsopt)\, C'(\numreadsopt)\,.
\end{equation}

\begin{figure}[!t]
    \centering
    \includegraphics[width=\columnwidth]{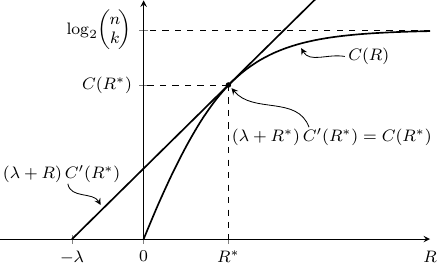}
    \vspace{-18pt}
    \caption{Capacity and the read-write cost trade-off. The optimum point $(\numreadsopt,C(\numreadsopt))$ is obtained according to~\eqref{eq:rwcost_condition}.}
    \label{fig:read_write}
    \vspace{-2ex}
\end{figure}

This optimisation is illustrated graphically in Fig.~\ref{fig:read_write}. For a candidate $\numreadsopt$, we can visualise a line $L(\numreads) = (\rwcostratio + \numreads)\,C'(\numreadsopt)$ that crosses the horizontal axis at $-\rwcostratio$ and has the same slope as $C(\numreads)$ at $\numreadsopt$. The larger $\numreadsopt$ is, the more the capacity curve $C(\numreads)$ flattens as it tends to the theoretical maximum $\log_2 \binom{\nummotifs}{\numchosen}$, and the line $L(\numreads)$ lowers down accordingly. The condition~\eqref{eq:rwcost_condition} means that the optimum $\numreadsopt$ corresponds to the case when the line just touches the capacity curve, and $\numreadsopt$ is equal to the point at which they meet.\footnote{The same procedure can be equivalently interpreted as finding a point at which the tangent line to $C(\numreads)$ crosses the horizontal axis at $-\rwcostratio$.}
Naturally, as the cost of writing relative to reading increases and $-\rwcostratio$ slides leftwards, $L(\numreads)$ touches $C(\numreads)$ at larger $\numreads$ and higher $C(\numreads)$.

Overall, we believe that this framework elucidates the relationship between capacity \rmine{curves} and the read-write cost trade-off: for example, it can be used to visually estimate what kind of read-write cost ratio is needed for a code rate to be relevant.

\section{Non-Binary LDPC Codes Over the Coupon Collector Channels}
\label{sec:ldpc}

To introduce error-correction coding into the motif-based DNA storage pipeline, we map motif combinations in $\cX$ to elements of the prime field $\mathrm{GF}(q)$, where $q$ is chosen to be the highest prime below $|\cX| = \binom{\nummotifs}{\numchosen}$. We use lexicographic ordering of motif combinations and leave the last $\binom{\nummotifs}{\numchosen} - q$ combinations unassigned.

We use non-binary LDPC \rfirst{(NB-LDPC) codes.} \rfirst{An NB-LDPC code} can be defined via a parity-check matrix $\mathbf{H} = (h_{ij})\rfirst{_{M \times N}}$ over $\mathrm{GF}(q)$ \rfirst{as the set of vectors $\bx_\mathsf{cw} = (\bx_1, \ldots, \bx_N)^{\mathsf{T}}$ that sa\-tis\-fy $\mathbf{H}\bx_\mathsf{cw} = 0$.}
In order to describe the decoding algorithms, we define the sets $N_i=\{j: h_{ij}\neq 0\}$ and $M_j=\{i: h_{ij}\neq 0\}$. The corresponding Tanner graph has an edge between variable node (VN) $v_j$ and check node (CN) $c_i$ if and only if $j\in N_i$ (or, equivalently, $i\in M_j$). In other words, $N_i$ and $M_j$ represent the local neighbourhoods of CN $c_i$ and VN $v_j$, respectively.
For the version of the channel without interference, we decode using a set-based belief propagation algorithm. For the version of the channel with interference, we use a probability-based belief propagation method dubbed QSPA (for $\boldsymbol{q}$-ary \textbf{s}um-\textbf{p}roduct \textbf{a}lgorithm)~\cite{ref:davey1998low},~\cite[Section 14.2]{ref:Ryan09}.

We remark that using only a subset of possible combinations has two important implications: First, it incurs a loss in the overall information transmission rate by a factor of $\log_2 q / \log_2\binom{\nummotifs}{\numchosen}$---for our running example of $(\nummotifs=8,\numchosen=4)$ and $\binom{\nummotifs}{\numchosen}=70$, the closest prime is $67$, and the (multiplicative) rate loss factor is $0.99$, implying a $1\%$ loss. Second, it breaks the symmetry of the channel---e.g., as we observe in our simulations, the all-zero codeword becomes more resilient to errors than other codewords. To circumvent this problem, we re-introduce symmetry in the channel by adding a pseudo-random mask modulo $|\cX| = \binom{\nummotifs}{\numchosen}$ to the codeword before transmission and undoing the effects of masking during decoding by the appropriate re-labelling of symbol possibilities and symbol likelihoods. In the following description of the decoding algorithms, we omit the trivial details of such re-labelling for clarity.

\vspace{-1ex}
\subsection{No Interference}
\label{sse:no_interference}


For the Coupon Collector Channel without interference (defined at the beginning of Section~\ref{sec:channel}), we employ a set-based decoder reminiscent of the message-passing Sudoku solver described in~\cite{ref:Atki14_sudoku}.
The decoder keeps track of the set of possibilities $S_j$ for each symbol and iteratively updates these sets based on parity-check constraints, progressively reducing the uncertainty about the transmitted symbols.
First, $S_j$ is initialised as the set of symbols in $\cX$ compatible with the corresponding channel output $\by_j$. If the number of unique motifs in $\by_j$ is $\ell$ (i.e., if $\by_j\in\cY_\ell$), then $|S_j| = \binom{\nummotifs - \ell}{\numchosen - \ell}$. The set of initial symbol possibilities $S_j$ for each symbol is passed to the corresponding VN $v_j$. We then iteratively apply the following procedure:

\begin{enumerate}
    \item \textit{CN update.} For convenience, we define a pairwise sum operation over finite sets of elements in $\mathrm{GF}(q)$ as \mbox{$A + B = \{a+b \, | \, a \in A, \, b \in B \}$} and a scalar multiplication operation as $c \cdot A = \{c \cdot a \, | \, a \in A\}$. For each \rfirst{VN} $v_j$, we generate the set of symbol possibilities $S_{ij}$ based on CN $c_i$ as
        \begin{equation}
            \label{eq:noint_cnupdate}
            S_{ij} = - h_{ij}^{-1} \sum_{t \in N_i \setminus \{j\}} h_{it} S_{t}\,.
        \end{equation}

    \item \textit{VN update.} Using $S_{ij}$ computed above, we update the set of symbol possibilities for each \rfirst{VN} $v_j$ as
        \begin{equation}
            \label{eq:noint_vnupdate}
            S_j \gets \bigcap_{i \in M_j} S_{ij} \cap S_j \,.
        \end{equation}
    \item \textit{Check for termination.} We repeat the decoding iterations \eqref{eq:noint_cnupdate}--\eqref{eq:noint_vnupdate} until either all VNs are resolved---i.e., the set of symbol possibilities for each VN is reduced to one element---or the total number of symbol possibilities remains the same for each VN, indicating decoding failure.
\end{enumerate}
\mycomment{
    Initial values, VN update, CN update. Set operations. Cite sudoku codes? Tree optimisation (perhaps a separate subsection)?
    }
    
\vspace{-1ex}
\subsection{Interference}
\label{sse:interference}


For the Coupon Collector Channel with interference (Section~\ref{sse:interference_capacity}), the belief propagation algorithm aims to approximate the symbol-wise posterior distribution of the transmitted codeword given the received information. This entails approximating a length-$q$ probability mass function for each codeword symbol. The output of the decoding algorithm is then given by the symbol-wise maximum \textit{a posteriori} estimate, provided the resulting codeword satisfies all parity constraints.

The QSPA works by performing message-passing over the Tanner graph corresponding to $\mathbf{H}$. Messages passed along the edges of the Tanner graph consist of $q$-ary probability mass functions. Let $Q_{ij} = (Q_{ij}^a)_{a \in \mathrm{GF}(q)}$ denote the message passed from $v_j$ to $c_i$, and $S_{ij} = (S_{ij}^a)_{a \in \mathrm{GF}(q)}$ denote the message in the other direction, from $c_i$ to $v_j$.

Decoding starts by initialising all $Q_{ij}$ vectors. For all $i,j$ such that $h_{ij}\neq 0$, we set $Q_{ij} = P_j$, where $P_j = (P_j^a)_{a\in\mathrm{GF}(q)}$ denotes the vector of conditional probabilities $\prob{\bx_j = a | \by_j}$ for $\bx_j$, the $j$-th symbol in the codeword $\bx_\mathsf{cw}$, and the corresponding channel output $\by_j$. The $P_j$ vectors are computed as described in Section~\ref{sse:interference_capacity} using~\eqref{eq:cond_prob_x_y}--\eqref{eq:cond_prob_y_x}. After initialisation, decoding proceeds by repeatedly iterating the following steps:

\begin{enumerate}
    \item \label{alg:qspa-step-1} \textit{CN update.} For each $a\in \mathrm{GF}(q)$, the update is
    \begin{equation}
        \label{eq:int_cnupdate}
        S_{ij}^a
        =
        \sum_{\bs}
        \mathds{1}\left(\sum_th_{it}\rfirst{s_t} + h_{ij}a = 0\right)
        \prod_t Q_{it}^{\rfirst{s_t}}\,,
    \end{equation}
    where the iteration variable \rfirst{$\bs = (s_t)$} runs over all $q$-ary sequences of length $|N_i| - 1$, indexed by $N_i \setminus \{j\}$ for notational convenience, and $t \in N_i \setminus \{j\}$.
    \item \textit{VN update.} For each $a\in \mathrm{GF}(q)$, compute first
    \begin{equation}
        \label{eq:int_vnupdate}
        Q_{ij}^a
        =
        P_j^a
        \prod_{t\in M_j \setminus \{i\}} S_{it}^a
    \end{equation}
    and then normalise each $Q_{ij}$ vector so that it sums to $1$.
    \item \textit{Check for termination.} \rfirst{Compute $\bz = (\bz_j)$ via}
    \begin{equation}
        \label{eq:int_termination}
        \bz_j = \arg\max_{a\in \mathrm{GF}(q)} P_j^a \prod_{t\in M_j} S_{it}^a\,.
    \end{equation}
    If $\mathbf{H}\bz=0$, return $\bz$ as the decoded codeword. Otherwise, go back to step \ref{alg:qspa-step-1}.
\end{enumerate}

This procedure may fail to output a codeword, so we set a maximum number of iterations, after which the algorithm declares decoding failure. 

It is important to point out that step \ref{alg:qspa-step-1} can be implemented efficiently via the Fast Fourier Transform (FFT) algorithm. Assume for simplicity that $\mathbf{H}$ consists only of $0$ and $1$ elements, as is the case in our experiments (Section~\ref{sec:results}), but the argument can be generalised to any $\mathbf{H}$ \rfirst{(see~\cite[Section 14.2]{ref:Ryan09})}. Then, the CN update rule in~\eqref{eq:int_cnupdate} can be rewritten as
\begin{equation}
\label{eq:update-s-messages-v2}
    S_{ij}^a
    =
    \sum_\bs \mathds{1}\left(\sum_t \rfirst{s_t} = -a \right)
    \prod_t Q_{it}^{\rfirst{s_t}}\,.
\end{equation}
This can be computed by first evaluating
\begin{equation}
    \label{eq:update-s-messages-conv}
    \Tilde{S}_{ij} = \bigotimes_{t\in N_i \setminus \{j\}} Q_{it}\,,
\end{equation}
where $\otimes$ denotes the circular convolution, and then re-indexing $\Tilde{S}_{ij}$ to obtain $S_{ij}^a = \Tilde{S}_{ij}^{-a}$ for all $a\in\mathrm{GF}(q)$. The FFT can be used to efficiently implement the series of circular convolutions in~\eqref{eq:update-s-messages-conv}.

\vspace{-1ex}
\subsection{Protograph-Based Non-Binary SC-LDPC Codes}
\label{sse:sc_ldpc}

Spatial coupling is a powerful technique that allows the construction of capacity-achieving codes under suboptimal but computationally efficient belief propagation (BP) decoding.
It involves interconnecting a sequence of Tanner graphs of the underlying uncoupled codes and creating a coupled chain where the termination boundaries contain CNs of lower average degrees.
The CNs at the boundaries help ignite the decoding process, and reliable information propagates from the boundaries of the chain inwards in a wave-like fashion.
Importantly, spatially coupled low-density parity-check (SC-LDPC) \mbox{codes~\cite{ref:Jime99,ref:Lent10}} have been proven to \textit{universally} achieve capacity over the broad class of binary-input memoryless symmetric channels~\cite{ref:Kude13}, which means that a sequence of codes that achieves capacity for one such channel is proven to achieve capacity for all of them.
Even though universality has not been formally proven for non-binary channels, it motivates us to choose non-binary SC-LDPC codes for Coupon Collector Channels and achieve excellent performance (Section~\ref{sec:results}) without channel-tailored code optimisation.

We use the classical protograph-based regular SC-LDPC code construction proposed in~\cite{ref:Srid07}. To obtain the protograph of a terminated $(\dv,\dc,\chainlength,\blocklength)$ SC-LDPC code, we start from a ``spatial'' sequence of $\chainlength$ copies of the protograph of the $(\dv,\dc)$-regular LDPC code and rearrange their edges such that a variable node (VN) at position $i \in \{ 1, \ldots, \chainlength \}$ is connected to $\dv$ consecutive positions $(i,i+1,\ldots,i+\dv-1)$, and a check node (CN) at position $i$ is connected to $\dc / \dv$ VNs at each of the $\dv$ positions $(i - \dv + 1, \ldots, i)$, where we assume $\dc / \dv$ is an integer. To connect the overhanging edges, $\dv - 1$ spatial positions that contain CNs only are appended to the sequence.
The Tanner graph is generated from the protograph using the standard copy-and-permute operation, whereby the SC-LDPC protograph is copied $\frac{\dv}{\dc}\blocklength$ times and the copies of the same edge in the original protograph are permuted.
The resulting Tanner graph contains \rfirst{$N = \chainlength\blocklength$} VNs and \rfirst{$M = \frac{\dv}{\dc}\blocklength(\chainlength + \dv - 1)$} CNs, yielding the design rate of
\begin{equation}
    \label{eq:rate_sc_ldpc}
    \rsc = 1 - \frac{\dv}{\dc}\cdot\frac{(\chainlength + \dv - 1)}{\chainlength}
    =  1 - \frac{\dv}{\dc}\left( 1 + \frac{\dv - 1}{\chainlength} \right) \,,
\end{equation}
which tends to the rate of the underlying LDPC ensemble, $1 - \dv/\dc\,$, as $\chainlength \to \infty$.
We refer the reader to~\cite{ref:Srid07,ref:Iyen12} for a more general description of protograph-based SC-LDPC codes.

\vspace{-1ex}
\subsection{Combinatorial Explosion in Decoding Complexity}
\label{sse:explosion}

One of the advantages of combinatorial motif-based DNA storage is that it enables a nearly linear growth in information density with the size of the motif library, in contrast with the more standard nucleotide-by-nucleotide synthesis where doubling the size of the alphabet only adds one bit to the maximum information density. Specifically, let us assume in this section that $\numchosen=\nummotifs/2$ so that $|\cX| = \binom{\nummotifs}{\numchosen}$ is maximised for a given $\nummotifs$. As $\nummotifs\to\infty$, the maximum information density grows as
\begin{equation}
    \label{eq:inf_density}
    \begin{aligned}
        \log_2|\cX| &= \log_2\binom{\nummotifs}{\nummotifs/2} \sim \logtwo{\frac{2^\nummotifs}{\sqrt{\nummotifs\pi/2}}} \\
        &= \nummotifs - \log_2\sqrt{\nummotifs\pi/2} = \nummotifs - \mathcal{O}(\log_2\nummotifs)\,.
    \end{aligned}
\end{equation}
Furthermore, increasing $\numchosen = \nummotifs / 2$ requires a higher $\numreads$ at the reading stage and makes the system more relevant for higher synthesis-sequencing cost ratios $\rwcostratio$, as we elaborated in Section~\ref{sse:read_write_cost}.

\begin{figure}[!t]
    \centering
    \includegraphics[width=\columnwidth]{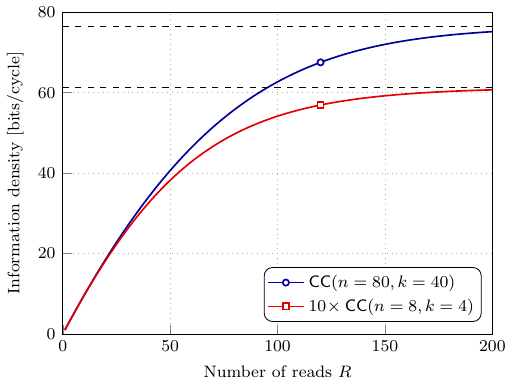}
    \vspace{-18pt}
    \caption{Capacity of the $\CC(\nummotifs=80,\numchosen=40,\numreads)$ channel (blue curve, circle marker) and of the library-splitting scheme with $a = 10$ (red curve, square marker). The gap between the curves illustrates the price paid for manageable computational complexity.}
    \label{fig:capacity_splits}
    \vspace{-2ex}
\end{figure}

However, increasing $\numchosen$ and $\nummotifs$ presents a significant challenge to the decoding methods proposed in Sections~\ref{sse:no_interference} and~\ref{sse:interference} above: their decoding complexity depends on $|\cX| = \binom{\nummotifs}{\numchosen}$ that grows exponentially with increasing $(\nummotifs,\numchosen\!=\!\nummotifs/2)$ as shown in~\eqref{eq:inf_density}---in other words, the very mechanism that makes combinatorial motif-based storage attractive renders straightforward application of the proposed coding methods impractical.

We propose the following approach to circumvent this problem: Suppose the largest alphabet the decoder can handle is a choice of $\tilde{\numchosen}$ out of $\tilde{\nummotifs}$ motifs. We split the target library of $\nummotifs = a \tilde{\nummotifs}$ motifs into $a$ groups of $\tilde{\nummotifs}$ and restrict the choice of $\numchosen = a \tilde{\numchosen}$ motifs to $\tilde{\numchosen}$ motifs \textit{within each group} instead of a choice of any $\numchosen$ motifs out of $\nummotifs$. Such restriction lowers the number of possible motifs and hence information density, but on the other hand it allows to operate the decoder as if each cycle contains $a$ independent symbols from a smaller alphabet. Crucially, since the total number of possible symbols in a cycle becomes
\begin{equation}
    \label{eq:explosion_alphabet_size}
    |\cX| = \binom{\tilde{\nummotifs}}{\tilde{\numchosen}}^{\nummotifs/\tilde{\nummotifs}}\,,
\end{equation}
the linear growth in maximum information density is retained:
\begin{equation}
    \label{eq:linear_growth_explosion}
    \log_2|\cX| = \log_2\binom{\tilde{\nummotifs}}{\tilde{\numchosen}}^{\nummotifs/\tilde{\nummotifs}}
    = \nummotifs \cdot \frac{\log_2\binom{\tilde{\nummotifs}}{\tilde{\numchosen}}}{\tilde{\nummotifs}} \,,
\end{equation}
albeit with a slower growth rate of $\log_2\binom{\tilde{\nummotifs}}{\tilde{\numchosen}}/\tilde{\nummotifs}$ instead of $1$ as in~\eqref{eq:inf_density}.
For example, for $\tilde{\nummotifs}=8$ and $\tilde{\numchosen}=4$ as in our examples earlier, the penalty term in~\eqref{eq:linear_growth_explosion} evaluates to about $0.77$, which means that asymptotically the scheme loses around $23\%$ in maximum information density relative to the computationally infeasible processing of all $\binom{\nummotifs}{\numchosen}$ possible combinations.

The capacity of the library-splitting scheme is obtained by averaging the capacity of an individual sub-channel over the binomial distribution of the effective number of reads. Specifically, taking the interference-free Coupon Collector Channel as an example, we get
\begin{equation}
    \label{eq:capacity_splits}
    \mathcal{C}_{a \times \CC} =
    a \sum_{r=1}^{R} \mathcal{C}_\CC(\tilde{\nummotifs},\tilde{\numchosen},r) \binom{\numreads}{i} \left( \frac{1}{a} \right)^r \left( 1 - \frac{1}{a} \right)^{\numreads - r}
\end{equation}
with $\mathcal{C}_\CC(\tilde{\nummotifs},\tilde{\numchosen},r)$ denoting the capacity of $\CC(\nummotifs = \tilde{\nummotifs},\numchosen = \tilde{\numchosen},R = r)$ evaluated via~\eqref{eq:capacity_cc}.

Fig.~\ref{fig:capacity_splits} compares the capacity curve for the $\CC(\nummotifs=80,\numchosen=40,\numreads)$ channel (blue curve with the circle marker) obtained using~\eqref{eq:capacity_cc} with the capacity curve for the case where the library of $\nummotifs=80$ motifs is split into $a = 10$ groups of $\tilde{\nummotifs}=8$ motifs each with $\tilde{\numchosen}=4$ (red curve with the square marker) obtained using~\eqref{eq:capacity_splits}. We observe that maintaining a manageable decoding complexity using the proposed scheme induces a penalty on information density. The penalty grows for increasing $\numreads$ and eventually stabilises at the asymptotic value dictated by~\eqref{eq:linear_growth_explosion}---in this case at around a quarter of the maximum information density. It is an interesting open problem to reduce this gap by, e.g., constructing a coding scheme that allows decoding without keeping track of all $|\cX| = \binom{\nummotifs}{\numchosen}$ possible symbols and thus avoids combinatorial explosion in decoding complexity altogether.

\section{Numerical Results}
\label{sec:results}

\begin{figure}[!t]
    \centering
    \includegraphics[width=\columnwidth]{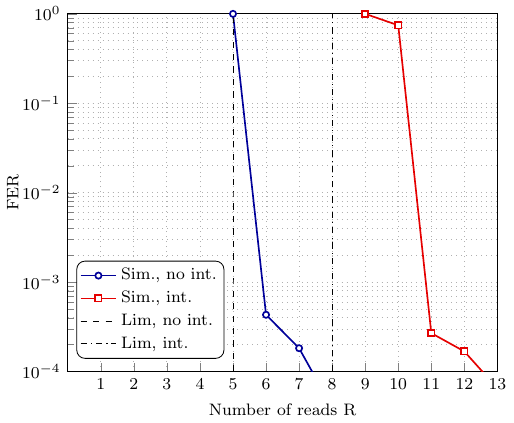}
    \vspace{-18pt}
    \caption{FER for the $(\dv\!=\!4, \dc\!=\!12, \chainlength\!=\!50, \blocklength\!=\!1002)$ SC-LDPC code ensemble with $q=67$ over the Coupon Collector Channel with $(\nummotifs=8,\numchosen=4)$, with no interference (blue, circle markers) and with $\pinter=0.078$ (red, square markers). For the employed overall information transmission rate of $3.92$ bits/cycle, the minimum achievable $\numreads$ (based on the capacity curves in Figs.~\ref{fig:capacity} and~\ref{fig:capacity_interference}) for the channel with and without interference is shown as the dash-dotted and dashed vertical line, respectively.}
    \label{fig:cc_result}
    \vspace{-2ex}
\end{figure}

To demonstrate the potential of the proposed coding scheme, we use the protograph-based regular $(\dv=4, \dc=12, \chainlength=50, \blocklength=1002)$ SC-LDPC code ensemble (see Section~\ref{sse:sc_ldpc}). For simplicity, we only consider binary entries in the parity-check matrix $\mathbf{H}$. We use $q=67$, targeting $(\nummotifs=8,\numchosen=4)$, which results in an overall information rate of $\rsc \cdot \log_2 q \approx 3.92$, with $\rsc$ calculated in~\eqref{eq:rate_sc_ldpc}.

Fig.~\ref{fig:cc_result} shows the simulated frame error rate (FER) of this code ensemble over $\CC(\nummotifs=8,\numchosen=4,\numreads, \pinter=0)$ (blue line with circle markers) and $\CC(\nummotifs=8,\numchosen=4,\numreads, \pinter=0.078)$ (red line with square markers). The minimum achievable $\numreads$ for these two channels and the employed code rate of $3.92$ is shown as the dashed and dash-dotted line, respectively.\footnote{
The minimum achievable $\numreads$ is the smallest $\numreads$ for which $C(\numreads) > 3.92$.}

We observe that the ensemble operates at a FER below $10^{-3}$ for $\numreads=6$ over the channel without interference and for $\numreads=11$ over the channel with $\pinter=0.078$. \rmine{The corresponding points are shown} in Figs.~\ref{fig:capacity} and~\ref{fig:capacity_interference} (black dots), well within the green-shaded regions. Notably, the capacity-approaching performance (especially for the interference-free case) was achieved without ensemble optimisation, which demonstrates the power of spatial coupling in conjunction with channel-tailored message-passing decoding.

\section{Conclusion}
\label{sec:conclusion}

We explore a practically interesting DNA storage scheme based on combinatorial motif-based synthesis. We analyse a data set from a real-world experiment and devise a model of the channel that mimics the empirically observed behaviour. At the core of the channel model is the Coupon Collector’s Problem, which was highlighted in a similar setting earlier by Preuss \textit{et al.} in~\cite{ref:preuss2021efficient,ref:preuss2024sequencing}. In contrast to their work, however, here we propose to process partial information available out of the channel instead of limiting decoding only to fully recovered symbols, which allows us to approach the fundamental capacity limits that are outside of the reach of the earlier schemes. We show how to calculate those fundamental limits and how to reason about the associated trade-off between the read and write costs. We devise an error-correcting scheme that approaches these limits in practice and show how to circumvent the combinatorial complexity explosion in our decoding algorithm at the cost of some additional redundancy. Overall, the paper offers a powerful framework to reason about and incorporate error correcting coding in one of the most promising approaches to DNA data storage. 

\balance

\bibliographystyle{IEEEtran}

\begin{thebibliography}{10}
\providecommand{\url}[1]{#1}
\csname url@samestyle\endcsname
\providecommand{\newblock}{\relax}
\providecommand{\bibinfo}[2]{#2}
\providecommand{\BIBentrySTDinterwordspacing}{\spaceskip=0pt\relax}
\providecommand{\BIBentryALTinterwordstretchfactor}{4}
\providecommand{\BIBentryALTinterwordspacing}{\spaceskip=\fontdimen2\font plus
\BIBentryALTinterwordstretchfactor\fontdimen3\font minus \fontdimen4\font\relax}
\providecommand{\BIBforeignlanguage}[2]{{%
\expandafter\ifx\csname l@#1\endcsname\relax
\typeout{** WARNING: IEEEtran.bst: No hyphenation pattern has been}%
\typeout{** loaded for the language `#1'. Using the pattern for}%
\typeout{** the default language instead.}%
\else
\language=\csname l@#1\endcsname
\fi
#2}}
\providecommand{\BIBdecl}{\relax}
\BIBdecl

\bibitem{ref:Ceze19}
L.~Ceze, J.~Nivala, and K.~Strauss, ``{Molecular digital data storage using DNA},'' \emph{Nature Reviews Genetics}, vol.~20, no.~8, pp. 456--466, May 2019.

\bibitem{ref:Yan23}
Y.~Yan, N.~Pinnamaneni, S.~Chalapati, C.~Crosbie, and R.~Appuswamy, ``Scaling logical density of {DNA} storage with enzymatically-ligated composite motifs,'' \emph{Scientific Reports}, vol.~13, no.~1, p. 15978, Sep. 2023.

\bibitem{ref:preuss2021efficient}
I.~Preuss, M.~Rosenberg, Z.~Yakhini, and L.~Anavy, ``Efficient {DNA}-based data storage using shortmer combinatorial encoding,'' \emph{Scientific Reports}, vol.~14, no.~1, p. 7731, Apr. 2024.

\bibitem{ref:preuss2024sequencing}
I.~Preuss, B.~Galili, Z.~Yakhini, and L.~Anavy, ``Sequencing coverage analysis for combinatorial {DNA}-based storage systems,'' \emph{IEEE Trans. Mol. Biol. Multi-Scale Commun.}, vol.~10, no.~2, pp. 297--316, Jun. 2024.

\bibitem{ref:coh2024opti}
T.~Cohen and E.~Yaakobi, ``Optimizing the decoding probability and coverage ratio of composite {DNA},'' in \emph{Proc. IEEE Int. Symp. Inf. Theory (ISIT)}, Athens, Greece, Jul. 2024.

\bibitem{ref:bar2023cover}
D.~Bar-Lev, O.~Sabary, R.~Gabrys, and E.~Yaakobi, ``Cover your bases: {How} to minimize the sequencing coverage in {DNA} storage systems,'' in \emph{Proc. IEEE Int. Symp. Inf. Theory (ISIT)}, Taipei, Taiwan, Jun. 2023, pp. 370--375.

\bibitem{ref:gruica2024reducing}
A.~Gruica, D.~Bar-Lev, A.~Ravagnani, and E.~Yaakobi, ``A combinatorial perspective on random access efficiency for {DNA} storage,'' in \emph{Proc. IEEE Int. Symp. Inf. Theory (ISIT)}, Athens, Greece, Jul. 2024.

\bibitem{ref:abraham2024covering}
H.~Abraham, R.~Gabrys, and E.~Yaakobi, ``Covering all bases: The next inning in {DNA} sequencing efficiency,'' in \emph{Proc. IEEE Int. Symp. Inf. Theory (ISIT)}, Athens, Greece, Jul. 2024.

\bibitem{ref:saba2024error}
O.~Sabary, I.~Preuss, R.~Gabrys, Z.~Yakhini, L.~Anavy, and E.~Yaakobi, ``Error-correcting codes for combinatorial composite {DNA},'' in \emph{Proc. IEEE Int. Symp. Inf. Theory (ISIT)}, Athens, Greece, Jul. 2024.

\bibitem{ref:heck2017fund}
R.~Heckel, I.~Shomorony, K.~Ramchandran, and D.~N.~C. Tse, ``Fundamental limits of {DNA} storage systems,'' in \emph{Proc. IEEE Int. Symp. Inf. Theory}, Aachen, Germany, Jun. 2017, pp. 3130--3134.

\bibitem{ref:stanley2011enumerative}
R.~P. Stanley, \emph{Enumerative Combinatorics}, 2nd~ed.\hskip 1em plus 0.5em minus 0.4em\relax Cambridge University Press, 2011, vol.~1.

\bibitem{ref:gallager1968information}
R.~G. Gallager, \emph{Information Theory and Reliable Communication}.\hskip 1em plus 0.5em minus 0.4em\relax John Wiley \& Sons, Inc., 1968.

\bibitem{ref:flajolet2009analytic}
P.~Flajolet and R.~Sedgewick, \emph{Analytic combinatorics}.\hskip 1em plus 0.5em minus 0.4em\relax Cambridge University Press, 2009.

\bibitem{ref:chandak2019improved}
S.~Chandak, K.~Tatwawadi, B.~Lau, M.~Kubit, J.~Mardia, J.~Neu, P.~Griffin, M.~Wootters, T.~Weissman, and H.~Ji, ``Improved read/write cost tradeoff in {DNA}-based data storage using {LDPC} codes,'' in \emph{Proc. {Annu}. {Allerton} {Conf}. {Commun}., {Control}, {Comput}.}, Monticello, IL, USA, Sep. 2019, pp. 147--156.

\bibitem{ref:davey1998low}
M.~C. Davey and D.~J.~C. MacKay, ``Low density parity check codes over {GF(q)},'' \emph{IEEE Commun. Lett.}, vol.~2, no.~6, pp. 165--167, Jun. 1998.

\bibitem{ref:Ryan09}
W.~E. Ryan and S.~Lin, \emph{Channel Codes}.\hskip 1em plus 0.5em minus 0.4em\relax Cambridge University Press, 2009.

\bibitem{ref:Atki14_sudoku}
C.~Atkins and J.~Sayir, ``Density evolution for {SUDOKU} codes on the erasure channel,'' in \emph{2014 8th International Symposium on Turbo Codes and Iterative Information Processing (ISTC)}, Bremen, Germany, Aug. 2014, pp. 233--237.

\bibitem{ref:Jime99}
A.~{Jimen\'ez Feltstr\"om} and K.~S. {Zigangirov}, ``Time-varying periodic convolutional codes with low-density parity-check matrix,'' \emph{IEEE Trans. Inf. Theory}, vol.~45, no.~6, pp. 2181--2191, Sep. 1999.

\bibitem{ref:Lent10}
M.~{Lentmaier}, A.~{Sridharan}, D.~J. {Costello}, and K.~S. {Zigangirov}, ``Iterative decoding threshold analysis for {LDPC} convolutional codes,'' \emph{IEEE Trans. Inf. Theory}, vol.~56, no.~10, pp. 5274--5289, Oct. 2010.

\bibitem{ref:Kude13}
S.~{Kudekar}, T.~{Richardson}, and R.~L. {Urbanke}, ``Spatially coupled ensembles universally achieve capacity under belief propagation,'' \emph{IEEE Trans. Inf. Theory}, vol.~59, no.~12, pp. 7761--7813, Dec. 2013.

\bibitem{ref:Srid07}
A.~{Sridharan}, D.~{Truhachev}, M.~{Lentmaier}, D.~J. {Costello}, and K.~S. {Zigangirov}, ``Distance bounds for an ensemble of {LDPC} convolutional codes,'' \emph{IEEE Trans. Inf. Theory}, vol.~53, no.~12, pp. 4537--4555, Dec. 2007.

\bibitem{ref:Iyen12}
A.~R. {Iyengar}, M.~{Papaleo}, P.~H. {Siegel}, J.~K. {Wolf}, A.~{Vanelli-Coralli}, and G.~E. {Corazza}, ``Windowed decoding of protograph-based {LDPC} convolutional codes over erasure channels,'' \emph{IEEE Trans. Inf. Theory}, vol.~58, no.~4, pp. 2303--2320, Apr. 2012.

\end{thebibliography}

\end{document}